\begin{document}

\title{Mod/Resc Parsimony Inference}

\author{
Igor Nor \inst{1,2,*}
\and%
Danny Hermelin \inst{3}
\and%
Sylvain Charlat \inst{1}
\and%
Jan Engelstadter \inst{4}
\and%
Max Reuter \inst{5}
\and%
Olivier Duron \inst{6}
\and%
Marie-France Sagot \inst{1,2,*}}

\institute{
Universit\'e de Lyon, F-69000, Lyon ; Universit\'e Lyon 1 ;
  CNRS, UMR5558
\and
Bamboo Team, INRIA Grenoble Rh\^one-Alpes, France
\and
Max Planck Institute for Informatics, Saarbr\"{u}cken - Germany
\and
Institute of Integrative Biology, ETH Zurich, Switzerland
\and
University College London, UK
\and
Institute of Evolutionary Sciences; CNRS - University of Montpellier II, France
\\
$^*$
{\it Corresponding authors:} \email{norigor@gmail.com,
Marie-France.Sagot@inria.fr}}


\date{}

\maketitle

\vspace{-0.8cm}
\begin{abstract}
We address in this paper a new computational biology problem that aims
at understanding a mechanism that could potentially be used
to genetically manipulate natural insect populations
infected by inherited, intra-cellular parasitic bacteria.
In this
problem, that we denote by \textsc{Mod/Resc
  Parsimony Inference}, we are given a boolean matrix and the goal is to find
two other boolean matrices with a minimum number of columns such that
an appropriately defined operation on these matrices gives back the
 input. We show that this is formally equivalent to the
\textsc{Bipartite Biclique Edge Cover} problem and
derive some complexity results for our problem using this
equivalence. We provide a new,
fixed-parameter tractability approach for solving both
that slightly improves upon a
previously published algorithm for the
\textsc{Bipartite Biclique Edge Cover}. Finally,
we present experimental results where we
applied some of our techniques to a real-life data set.\\
{\bf Keywords:} Computational biology, biclique edge covering, bipartite graph,
boolean matrix, NP-completeness, graph theory, fixed-parameter tractability, kernelization.
\end{abstract}

\section{Introduction}
\label{Section: Introduction}

{\it Wolbachia} is a genus of inherited, intra-cellular bacteria that
infect many arthropod species, including a significant proportion of
insects.
The bacterium was first identified in 1924 by M. Hertig and
S. B. Wolbach in Culex pipiens, a species of mosquito.
{\it Wolbachia} spreads by altering the reproductive capabilities of
its hosts \cite{Engelstadter2009}. One of these alterations consists
in inducing so-called \emph{cytoplasmic incompatibility} \cite{Engelstadter2009b}.
This phenomenon, in its simplest expression, results in the death of
embryos produced in crosses between males carrying the infection and
uninfected females. A more complex pattern is the death of embryos
seen in crosses between males and females carrying different {\it
  Wolbachia} strains. The study of {\it Wolbachia} and cytoplasmic incompatibility is of
interest due to the high incidence of such infections, amongst others
in human disease vectors such as mosquitoes, where cytoplasmic
incompatibility could potentially be used as a driver mechanism for
the genetic manipulation of natural populations.

The molecular mechanisms underlying cytoplasmic incompatibility are currently unknown, but the observations are consistent with a ``toxin / antitoxin'' model~\cite{Poinsot2003}.
According to this model, the bacteria present in males modify the
sperm (the so-called modification, or mod factor) by depositing a
``toxin'' during its maturation. Bacteria present in females, on the
other hand, deposit an antitoxin (rescue, or resc factor) in the eggs,
so that offsprings of infected females can develop normally. The
simple compatibility patterns seen in several insect hosts
species~\cite{Bordenstein2007,Mercot2004,Dobson2001} has lead to the
general view that cytoplasmic incompatibility relies on a single pair
of mod / resc genes. However, more complex patterns, such as those
seen in Table~\ref{t:cdata} of the mosquito {\it Culex
  pipiens}~\cite{Duron2006}, suggest that this conclusion cannot be
generalized. The aim of this paper is to provide a first model and
algorithm to determine the minimum number of mod and resc genes
required to explain a compatibility dataset for a given insect
host. Such an algorithm will have an important impact on the understanding
of the genetic architecture of cytoplasmic incompatibility. Beyond
{\it Wolbachia}, the method proposed here can be applied to any
parasitic bacteria inducing cytoplasmic incompatibility.

\begin{figure}
\center
\includegraphics[width=14cm,height=8cm]{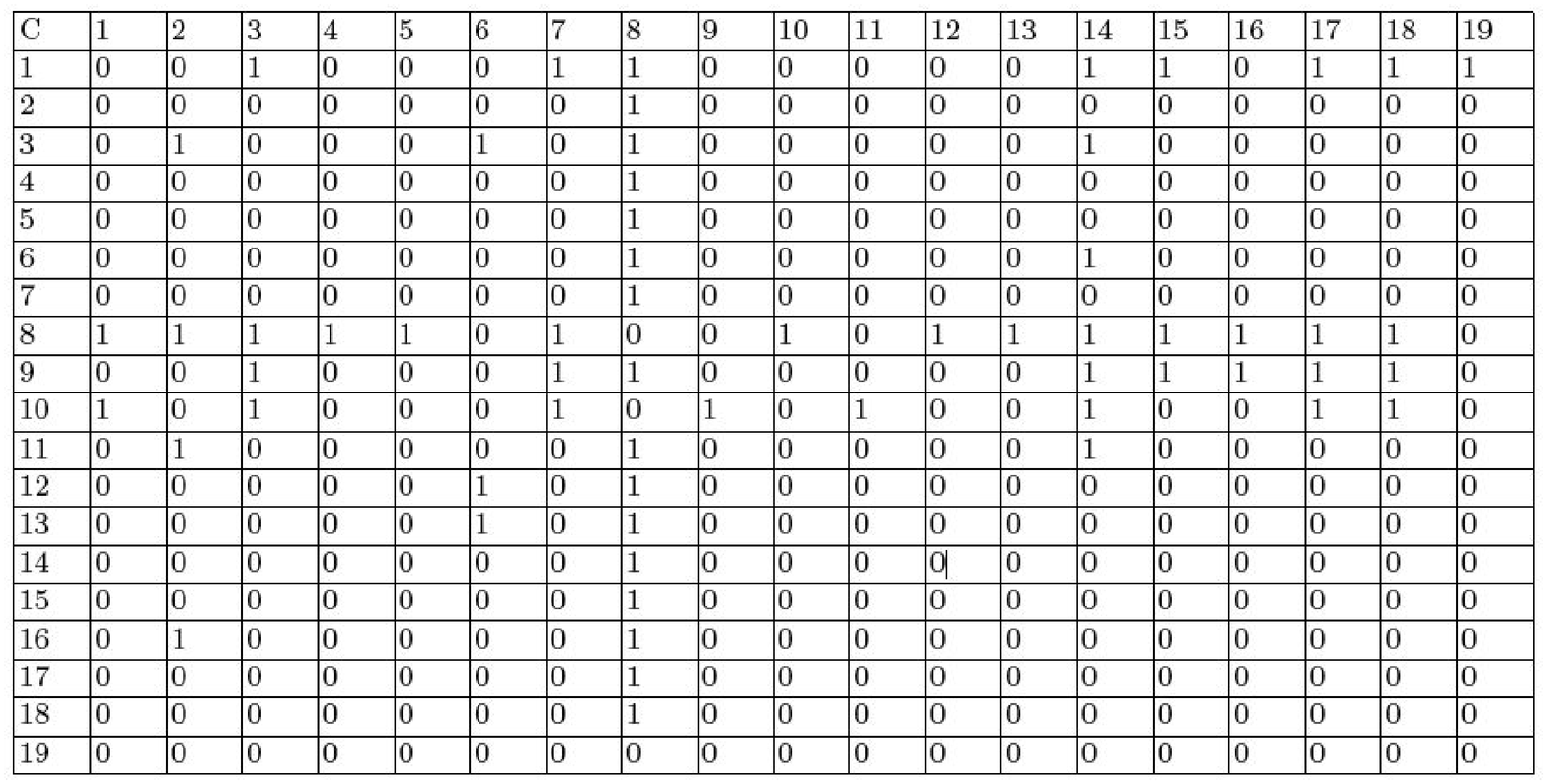}
\caption{The {\it Culex pipiens} dataset. Rows represent females and
  columns males.}
\label{t:cdata}
\end{figure}


\vspace{-0.4cm}
Let us now propose a formal description of this problem. Let the \emph{compatibility
matrix} $C$ be an $n$-by-$n$ matrix describing the observed
cytoplasmic compatibility relationships among $n$ {\it Wolbachia} strains, with
females in rows and males in columns. For the {\it Culex pipiens}
dataset, the content of the $C$ matrix is directly given by
Table~\ref{t:cdata}. For each entry $C_{i,j}$ of this matrix, a value of $1$
indicates that the cross between the $i$'th female and $j$'th
male is incompatible, while a value of $0$ indicates it is
compatible. No intermediate levels of incompatibility are
observed in {\it Culex pipiens}, so that such a discrete code (0 or
1) is sufficient to describe the data. Let the \emph{mod
matrix} $M$ be an $n$-by-$k$ matrix, with $n$ strains and $k$ 
mod genes. For each $M_{i,j}$ entry, a $0$ indicates that strain
$i$ does not carry gene $j$, and a $1$ indicates that it does carry
this gene. Similarly, the \emph{rescue matrix} $R$ is an
$n$-by-$k$ matrix, with $n$ strains and $k$ resc genes, where 
$R_{i,j}$ entries indicate whether strain $i$ carries gene $j$.
A cross between female $i$ and male $j$ is compatible only if
strain $i$ carries at least all the rescue genes matching the
mod genes present in strain $j$. Using this rule, one can
assess whether an $(M,R)$ pair is a solution to the $C$ matrix,
that is, to the observed data.

We can easily find non-parsimonious solutions to this problem,
that is, large $M$ and $R$ matrices that are solutions to $C$, as will
be proven in the next section. However, solutions may also exist with
fewer mod and resc genes.
We are interested in the minimum number of
genes for which solutions to $C$ exist,
and the set of solutions for this minimum number. This problem
can be summarized as follows: Let $C$ (compatibility) be a
boolean $n$-by-$n$ matrix. A pair of $n$-by-$k$ boolean 
matrices $M$ (mod) and $R$ (resc) is called a solution to $C$
if, for any row $j$ in $R$ and row $i$ in $M$, $C_{i,j} = 0$ if 
and only if $R_{j,\ell} \geq M_{i,\ell}$ holds 
for all $\ell$, $1 \leq \ell \leq k$. 
This appropriately models the fact stated above that,
for any cross to be compatible, the
female must carry at least all the rescue genes matching the
mod genes present in the male.
For a given matrix $C$, we are
interested in the minimum value of $k$ for which solutions to
$C$ exist, and the set of solutions for this minimum $k$. We
refer to this problem as the
\textsc{Mod/Resc Parsimony Inference}
problem (see also Section~\ref{Section: Problem Definition}).
Since in come cases, data (on females or males) may be missing, the
compatibility matrix $C$ has dimension $n$-by-$m$ for $n$ not
necessarily equal to $m$. We will consider this more general situation in
what follows.

In this paper, we present the \textsc{Mod/Resc Parsimony Inference} problem and prove it
is equivalent to a well-studied graph-theoretic problem known in
the literature by the name of \textsc{Bipartite Biclique Edge Cover}. In this problem, we are given a bipartite graph, and we want to cover its edges with a minimum number of complete
bipartite subgraphs (bicliques). This problem is known to be
NP-complete, and thus \textsc{Mod/Resc Parsimony Inference} turns out to
be NP-complete as well. In Section~\ref{Section: Fixed-parameter tractability},
we investigate a previous fixed-parameter tractability approach~\cite{FleischnerMujuniPaulusmaSzeider2009} for solving the
\textsc{Bipartite Biclique Edge Cover} problem
and improve its algorithm. In addition, we show a reduction between this
problem and the \textsc{Clique Edge Cover} problem. Finally, in
Section~\ref{Section: Experimental Results}, we present
experimental results where we applied some of these techniques
to the {\it Culex pipiens} data set presented in
Table~\ref{t:cdata}. This provided a surprising finding from a biological point of view.

\section{Problem Definition and Notation}
\label{Section: Problem Definition}

In this section, we briefly review some notation and
terminology that will be used throughout the paper. We also
give a precise mathematical definition of the \textsc{Mod/Resc
Parsimony Inference} problem we study. For this, we first need
to define a basic operation between two boolean vectors:

\begin{definition}
The
$\otimes$ vectors multiplication is an operation between two
boolean vectors $U,V \in \{0,1\}^k$ such that~:
$$
U \otimes V :=\left\{
\begin{array}{lll}
1 \quad & : & \quad U[i] > V[i] \textrm{ for some } i \in \{1,\ldots,k\}\\
0 \quad & : & \quad \textrm{otherwise}
\end{array}
\right.
$$
In other words, the result of the $\otimes$ multiplication is
$0$ if, for all corresponding locations, the value in the second
vector is not less than in the first.
\end{definition}

The reader should note that this operation is not symmetric.
For example, if $U:=(0,1,1,0)$ and $V:=(1,1,1,0)$, then $U
\otimes V = 0$,  while $V \otimes U = 1$. We next generalize
the $\otimes$ multiplication to boolean matrices.
This follows easily from the observation that the boolean vectors $U,V \in
\{0,1\}^k$ may be seen as matrices of dimension $1$-by-$k$. We thus
use the same symbol $\otimes$ to denote the operation applied to matrices.
\begin{definition}
The $\otimes$ row-by-row matrix multiplication is a function $\{0,1\}^{n
\times k} \times \{0,1\}^{m \times k} \to \{0,1\}^{n \times m}$
such that $C=M \otimes R$ iff $C_{i,j}=M_i \otimes R_j$ for all
$i \in \{1,\ldots,n\}$ and $j \in \{1,\ldots m\}$. (Here $M_i$
and $R_j$ respectively denote the $i$'th and $j$'th row of $M$
and $R$.)
\end{definition}

\begin{definition}
In the \textsc{Mod/Resc Parsimony Inference} problem, the input
is a boolean matrix $C \in \{0,1\}^{n \times m}$, and the goal
is to find two boolean matrices $M\in \{0,1\}^{n \times k}$ and
$R\in \{0,1\}^{m \times k}$ such that $C_{i,j}=M \otimes R$ and
with $k$ minimal.
\end{definition}
 
We first need to prove there is always a correct solution to the \textsc{Mod/Resc Inference Problem}. Here we show that there is always a solution for as many mod and resc genes as the minimum between the number of male and female strains in the dataset.

\begin{lemma}\label{l:ldef}
The \textsc{Mod/Resc Parsimony Inference} problem always has a
solution.
\end{lemma}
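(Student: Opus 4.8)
The plan is to prove existence constructively: I would simply exhibit explicit matrices $M$ and $R$ satisfying $C = M \otimes R$. The parsimony claim in the surrounding text, namely that a solution with $k = \min(n,m)$ always exists, then follows by giving two such constructions, one using $k = n$ genes and one using $k = m$ genes, and retaining whichever is cheaper.

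For the first construction (yielding $k = n$), I would take $M \in \{0,1\}^{n \times n}$ to be the identity matrix, so that its $i$-th row $M_i$ has a single $1$, located in coordinate $i$, and define $R \in \{0,1\}^{m \times n}$ by $R_{j,\ell} := 1 - C_{\ell,j}$. Since $M_i$ is nonzero only in coordinate $i$, the value $M_i \otimes R_j$ equals $1$ exactly when $R_{j,i} = 0$, i.e. exactly when $C_{i,j} = 1$; this single equivalence discharges both directions of the defining ``iff'' for $\otimes$ at once, so $C = M \otimes R$. For the second construction (yielding $k = m$, useful when $m < n$), I would instead set $M := C$, so that $M_i = C_i$, and let $R \in \{0,1\}^{m \times m}$ be the all-ones matrix with zeros on the diagonal, $R_{j,\ell} := 1$ for $\ell \neq j$ and $R_{j,j} := 0$. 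The only coordinate in which $R_j$ vanishes is $\ell = j$, so $M_i \otimes R_j = 1$ iff $C_{i,j} = 1$, again giving $C = M \otimes R$.

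Choosing the construction with fewer columns produces a valid solution with $k = \min(n,m)$, which establishes the lemma and matches the stated bound.

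The verification itself is routine, so there is no deep obstacle here; the one genuine subtlety is that the operation $\otimes$ is \emph{not} symmetric. This is exactly why the $\min(n,m)$ bound cannot be obtained by transposing a single construction, and why two genuinely distinct constructions, one indexed by the rows of $M$ and one by the rows of $R$, are needed. The only point requiring a little care is to ensure that each chosen row $M_i$ (resp. $R_j$) has a single distinguished coordinate, so that both directions of the equivalence in the definition of a solution collapse to one clean condition; once that is arranged, the check is immediate.
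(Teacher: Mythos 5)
Your proof is correct, and its first construction coincides exactly with the paper's: take $M$ to be the $n\times n$ identity and $R=\overline{C}^{T}$ (i.e.\ $R_{j,\ell}=1-C_{\ell,j}$), verified by the same one-line observation that row $M_i$ is nonzero only in coordinate $i$, so $M_i\otimes R_j=1$ iff $R_{j,i}=0$ iff $C_{i,j}=1$. Where you go beyond the paper is your second construction ($M:=C$ and $R:=\overline{I_m}$, the complement of the $m\times m$ identity), which yields $k=m$ columns: the paper's own proof exhibits only the $k=n$ solution, even though the sentence preceding the lemma promises a solution with $k=\min(n,m)$, so your version actually substantiates that stronger claim in the case $m<n$, which the paper's proof does not. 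One small correction to your closing discussion: the two constructions are not as independent as you suggest, because $\otimes$ satisfies the duality $C=M\otimes R$ iff $C^{T}=\overline{R}\otimes\overline{M}$ (complement both factors and swap their roles, since $M_{i,\ell}>R_{j,\ell}$ iff $\overline{R}_{j,\ell}>\overline{M}_{i,\ell}$); applying your first construction to $C^{T}$ and then this duality recovers your second construction exactly. So the asymmetry of $\otimes$ can in fact be circumvented by transposition, provided one also complements — this does not affect the validity of anything you proved, but it means a single construction suffices in principle.
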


\begin{proof}
A satisfying output for the \textsc{Mod/Resc Parsimony Inference}
problem always exists for any possible $C$ of size $n$-by-$m$.
For instance, let $M$ be of size $n$-by-$n$ and equal to the identity matrix, and
let $R$ be of size $m$-by-$n$ and such that
$R=\overline{C}^T$.
This solution is correct since the only $1$-value in an arbitrary row
$r_i$ of the matrix $M$ is at location $M_{ii}$. Thus, the only situation
where $C_{ij}=1$ is when $R_{ji}=0$, which is the case by construction.
\qed
\end{proof}

We will be using some standard graph-theoretic terminology and
notation. We use $G$, $G'$, and so forth to denote graphs in general,
where $V(G)$ denotes the vertex set of a graph $G$, and $E(G)$
its edge-set.
By a \emph{subgraph} of $G$, we mean a graph $G'$
with $V(G') \subseteq V(G)$ and $E(G') \subseteq E(G)$. For a
bipartite graph $G$, \emph{i.e.} a graph whose vertex-set can
be partitioned into two classes with no edges occurring between
vertices of the same class, we use $V_1(G)$ and $V_2(G)$ to
denote the two vertex classes of $G$. A \emph{complete
bipartite graph} (\emph{biclique}) is a bipartite graph $G$
with $E(G) := \{\{u,v\} : u \in V_1(G),v \in V_2(G) \}$. We
will sometimes use $B$, $B_1$, and so forth to denote
bicliques.

\section{Equivalence to Bipartite Biclique Edge Cover}
\label{Section: Equivalence}

In this section, we show that the \textsc{Mod/Resc Parsimony
Inference} problem is equivalent to a classical and well-studied
graph theoretical problem known in the literature as the
\textsc{Bipartite Graph Biclique Edge Cover} problem. Using
this equivalence, we first derive the complexity status of
\textsc{Mod/Resc Parsimony Inference}, and later devise FPT
algorithms for this problem. We begin with a formal definition
of the \textsc{Bipartite Graph Biclique Edge Cover} problem.

\begin{definition}
In the \textsc{Bipartite Biclique Edge Cover Problem} problem,
the input is a bipartite graph $G$, and the goal is to find the
minimum number of biclique subgraphs $B_1,\ldots,B_k$ of $G$
such that $E(G):=\bigcup_\ell E(B_\ell)$.
\end{definition}

Given a bipartite graph $G$ with $V_1(G):=\{u_1,\ldots,u_n\}$
and $V_2(G):=\{u_1,\ldots,u_m\}$, the \emph{bi-adjacency} matrix
of $G$ is a boolean matrix $A(G) \in \{0,1\}^{n \times m}$
defined by $A(G)_{i,j} := 1 \iff \{u_i,v_j\} \in E(G)$. In this
way, every boolean matrix $C$ corresponds to a bipartite graph,
and vice versa.

\begin{theorem}
\label{Theorem: Equivalence}%
Let $C$ be a boolean matrix of size $n \times m$. Then there
are two matrices $M\in \{0,1\}^{n \times k}$ and $R\in
\{0,1\}^{m \times k}$ with $C=M \otimes R$ iff the bipartite
graph $G$ with $A(G):=C$ has a biclique edge cover with $k$
bicliques.
\end{theorem}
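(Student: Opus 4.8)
The plan is to establish a bijection between biclique edge covers of $G$ and pairs of matrices $(M,R)$ satisfying $C = M \otimes R$, in such a way that the number $k$ of bicliques equals the number $k$ of columns in $M$ and $R$. This is the natural approach: since we must prove an "iff" relating an optimal-size object on each side, I would prove both directions by an explicit, structure-preserving construction. The key observation driving both directions is the following reformulation of the $\otimes$ operation. By definition, $C_{i,j} = M_i \otimes R_j = 0$ exactly when $R_{j,\ell} \ge M_{i,\ell}$ for every column $\ell$; equivalently, $C_{i,j} = 1$ exactly when there is some column $\ell$ with $M_{i,\ell} = 1$ and $R_{j,\ell} = 0$. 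It is cleaner to work with $R' := \overline{R}$ (the bitwise complement), so that $C_{i,j} = 1$ iff there exists a column $\ell$ with $M_{i,\ell} = 1$ \emph{and} $R'_{j,\ell} = 1$. In matrix terms this says $C_{i,j} = 1$ iff row $i$ of $M$ and row $j$ of $R'$ share a common $1$ in some column.

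Let me set up the correspondence between columns and bicliques. Think of each column $\ell \in \{1,\dots,k\}$ as defining a candidate biclique $B_\ell$ whose left vertex set is $\{u_i : M_{i,\ell} = 1\} \subseteq V_1(G)$ and whose right vertex set is $\{v_j : R'_{j,\ell} = 1\} \subseteq V_2(G)$; the edges of $B_\ell$ are exactly all pairs $\{u_i,v_j\}$ with $M_{i,\ell} = R'_{j,\ell} = 1$. With the reformulation above, the edge $\{u_i,v_j\}$ lies in $\bigcup_\ell E(B_\ell)$ precisely when some column $\ell$ has $M_{i,\ell} = R'_{j,\ell} = 1$, which holds iff $C_{i,j} = 1$, i.e. iff $\{u_i,v_j\} \in E(G)$ since $A(G) = C$. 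This single equivalence, once verified, simultaneously handles both directions.

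For the forward direction, given $M,R$ with $C = M \otimes R$, I would define $R' = \overline{R}$ and the bicliques $B_1,\dots,B_k$ by the recipe above, then verify (i) each $B_\ell$ is genuinely a biclique subgraph of $G$, meaning every edge it declares is actually present in $G$, and (ii) the union of their edge sets is exactly $E(G)$. Point (ii) is immediate from the equivalence of the previous paragraph. Point (i) requires a small check: if $M_{i,\ell} = R'_{j,\ell} = 1$, then column $\ell$ witnesses $C_{i,j} = 1$, so $\{u_i,v_j\} \in E(G)$, confirming $B_\ell \subseteq G$. This yields a biclique edge cover with $k$ bicliques. For the reverse direction, given a cover $B_1,\dots,B_k$, I would read off $M$ and $R'$ by setting $M_{i,\ell} = 1$ iff $u_i \in V_1(B_\ell)$ and $R'_{j,\ell} = 1$ iff $v_j \in V_2(B_\ell)$, then put $R = \overline{R'}$; the same equivalence shows $M \otimes R = C$, again with matching parameter $k$.

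The genuinely delicate point — the one I expect to be the main obstacle — is the reverse direction's correctness, specifically guaranteeing that the reconstructed $M \otimes R$ produces \emph{no spurious $1$'s}. Because the $B_\ell$ are subgraphs of $G$, every edge declared by a biclique is a true edge of $G$, so no entry $C_{i,j} = 0$ can accidentally be forced to $1$; this is exactly where the subgraph condition $E(B_\ell) \subseteq E(G)$ is used, and it must be invoked carefully. Conversely, the \emph{covering} condition $E(G) = \bigcup_\ell E(B_\ell)$ ensures no true edge is missed. Keeping these two containments straight, and remembering that the asymmetry of $\otimes$ is absorbed cleanly by complementing $R$, is the crux; everything else is bookkeeping over the index $\ell$.
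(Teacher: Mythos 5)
Your proposal is correct and takes essentially the same approach as the paper: the column-to-biclique correspondence $M_{i,\ell}=1 \iff u_i \in V_1(B_\ell)$ and $R_{j,\ell}=0 \iff v_j \in V_2(B_\ell)$ is exactly the paper's construction in both directions, with your complemented matrix $R' = \overline{R}$ being only a notational repackaging of the condition $R_{j,\ell}=0$. Your verification steps --- completeness of each biclique ruling out spurious $1$'s, and the covering condition ensuring every entry $C_{i,j}=1$ is witnessed by some column --- coincide with the paper's two case checks.
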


\begin{proof}
\bf $(\Longleftarrow)$ \rm Let $G$
be the bipartite graph with the bi-adjacency matrix $C$, and
suppose $G$ has biclique edge cover $B_1, B_2, \ldots, B_k$. We
construct two boolean matrices $M$ and $R$ as follows: Let
$V_1(G):=\{u_1,\ldots,u_n\}$ and $V_2(G):=\{v_1,\ldots,v_m\}$.
We define:
\begin{enumerate}
\item $M_{i,\ell} = 1 \iff u_i \in V_1(B_\ell)$.
\item $R_{j,\ell} = 0 \iff v_j \in V_2(B_\ell)$.
\end{enumerate}
An illustration of this construction is given in
Figure~\ref{f:redex}.

\begin{figure}
\center
\includegraphics[width=14cm,height=9cm]{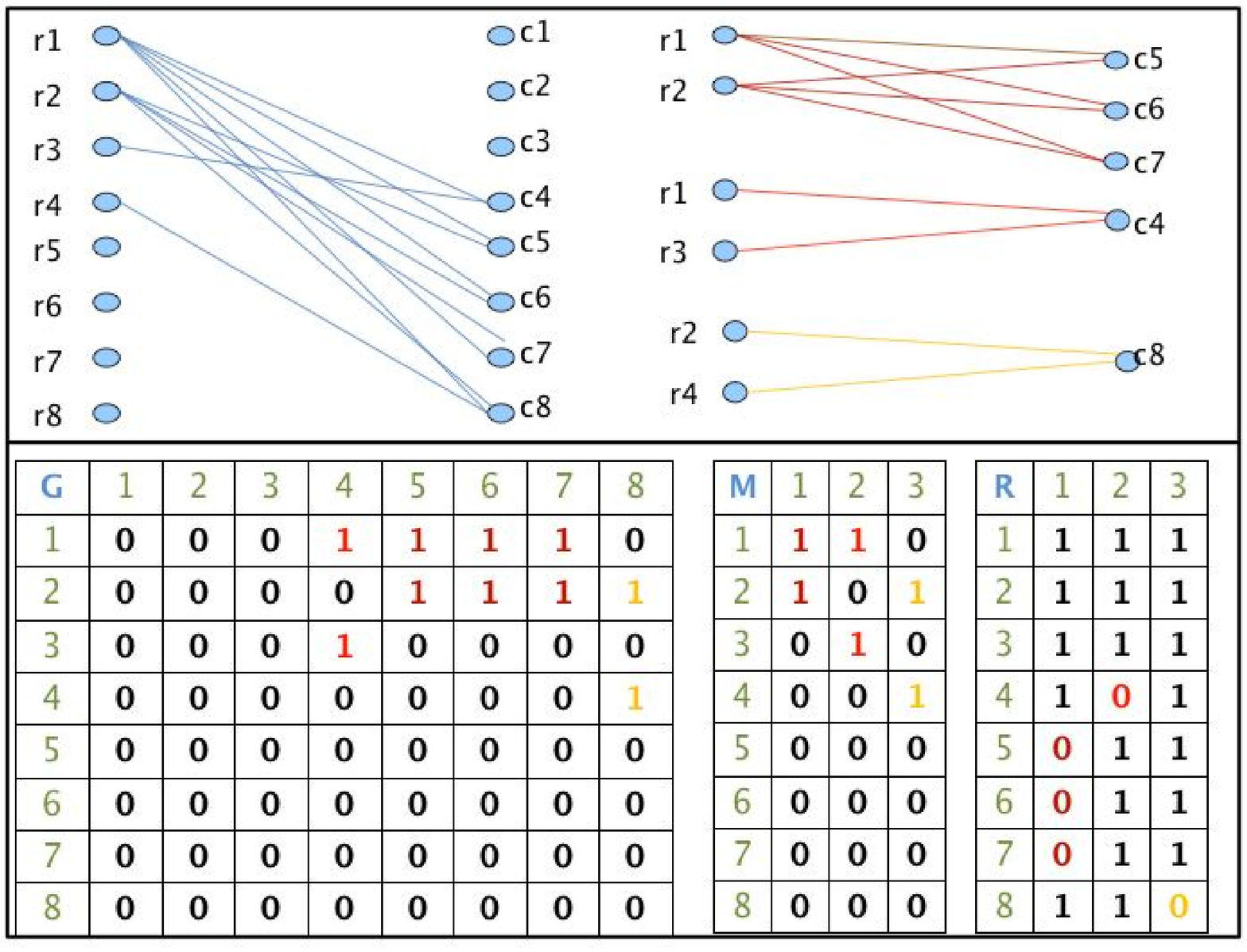}
\caption{Reduction illustrated.} \label{f:redex}
\end{figure}

We argue that $C=M \otimes R$. Consider an arbitrary location
$C_{i,j}=1$. By definition we have $\{u_i,v_j\} \in E(G)$.
Since the bicliques $B_1,\ldots,B_k$ cover all edges of $G$, we
know that there is some $\ell$, $\ell \in \{1,\ldots,k\}$, with
$u_i \in V_1(B_\ell)$ and $v_j \in V_2(B_\ell)$. By 
construction we know that $M_{i,\ell}=1$ and $R_{j,\ell}=0$,
and so $M_i \otimes R_j = 1$, which means that the entry at row
$i$ and column $j$ in $M \otimes C$ is equal to 1. On the other
hand, if $C_{ij}=0$, then $\{u_i,v_j\} \notin E(G)$, and thus 
there is no biclique $B_\ell$ with $u_i \in V_1(B_\ell)$ and 
$v_j \in V_2(B_\ell)$. As a result, for all $\ell \in
\{1,\ldots,k\}$, if $M_{i,\ell} = 1$ then $R_{i,\ell} = 1$ as
well, which means that the result of the $\otimes$
multiplication between the $i$'th row in $M$ and the $j$'th row in
$R$ will be equal to 0.

\bf $(\Longrightarrow)$ \rm Assume there are two matrices $M \in
\{0,1\}^{n \times k}$ and $R \in \{0,1\}^{m \times k}$ with
$C=M \otimes R$. Construct $k$ subgraphs $B_1,\ldots,B_k$ of
$G$, where the $\ell$'th subgraph is defined as follows:
\begin{enumerate}
\item $u_i \in V_1(B_\ell) \iff M_{i,\ell}=1$.
\item $v_j \in V_2(B_\ell) \iff R_{j,\ell}=0$.
\item $\{u_i,v_j\} \in E(B_\ell) \iff \{v_i,v_j\} \in
    E(G)$.
\end{enumerate}

We first argue that each of the subgraphs $B_1,\ldots,B_k$ is a
biclique. Consider an arbitrary subgraph $B_\ell$, and an
arbitrary pair of vertices $u_i \in V_1(B_\ell)$ and $v_j 
V_2(B_\ell)$. By construction, it follows that $M_{i,\ell}=1$ 
and $R_{i,\ell}=0$. As a result, it must be that $C_{i,j}=1$,
which means that $\{u_i,v_j\} \in E(G)$. Next, we argue that
$\bigcup_\ell E(B_\ell) = E(G)$. Consider an arbitrary edge
$\{u_i,v_j\} \in E(G)$. Since $C=A(G)$, we have $C_{i,j}=1$.
Furthermore, since $M \otimes R = C$, there must be some $\ell
\in \{1,\ldots,k\}$ with $M_{i,\ell} > R_{j,\ell}$. However,
this is exactly the condition for having $u_i$ and $v_j$ in the 
biclique subgraph $B_\ell$. It follows that indeed
$\bigcup_\ell E(B_\ell) = E(G)$, and thus the theorem is
proved. \qed
\end{proof}

Due to the equivalence between \textsc{Mod/Resc Parsimony
Inference} and \textsc{Bipartite Biclique Edge Cover}, we can
infer from known complexity results regarding \textsc{Bipartite
Biclique Edge Cover} the complexity of our problem. First, since
\textsc{Bipartite Biclique Edge Cover} is well-known to be
NP-complete~\cite{Orlin1977}, it follows that \textsc{Mod/Resc Parsimony 
Inference} is NP-complete as well. Furthermore, Gruber and
Holzer~\cite{GruberHolzer2007} recently showed that
\textsc{Bipartite Biclique Edge Cover} problem cannot be
approximated within a factor of $n^{1/3-\varepsilon}$ unless
P = NP where $n$ is the total number of vertices. Since the reduction given in Theorem~\ref{Theorem:
Equivalence} is clearly an approximate preserving reduction, we
can deduce the following:

\begin{theorem}
\label{Theorem: Inapproximabilty}%
\textsc{Mod/Resc Parsimony Inference} is
\textnormal{NP-complete}, and furthermore, for all $\varepsilon
> 0$, the problem cannot be approximated within a factor of
$(n+m)^{1/3-\varepsilon}$ unless \textnormal{P} =
\textnormal{NP}.
\end{theorem}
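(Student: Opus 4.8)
The plan is to leverage the equivalence established in Theorem~\ref{Theorem: Equivalence} together with the two cited hardness results for \textsc{Bipartite Biclique Edge Cover}, transporting both the NP-hardness of~\cite{Orlin1977} and the inapproximability of~\cite{GruberHolzer2007} across the reduction. The first task is to confirm membership in NP: given a candidate pair $(M,R)$ of matrices each with $k$ columns, one can compute $M \otimes R$ and compare it entrywise with $C$ in time polynomial in $n$, $m$, and $k$, so a solution of size $k$ serves as a polynomial-size, polynomial-time-checkable certificate.

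For NP-hardness, I would observe that the construction underlying Theorem~\ref{Theorem: Equivalence} runs in polynomial time in both directions: from an $n \times m$ matrix $C$ one builds in linear time the bipartite graph $G$ with $A(G) = C$ and $|V_1(G)| = n$, $|V_2(G)| = m$, and conversely from any biclique edge cover one reads off the matrices $M$ and $R$ directly. Since the theorem shows that $C$ admits matrices with $k$ columns exactly when $G$ has a biclique edge cover with $k$ bicliques, the optimal objective values of the two problems coincide. Combined with the NP-completeness of \textsc{Bipartite Biclique Edge Cover}~\cite{Orlin1977}, this yields NP-completeness of \textsc{Mod/Resc Parsimony Inference}.

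The inapproximability claim then follows by the same value-preserving reduction, argued by contraposition. Suppose there were a polynomial-time algorithm approximating \textsc{Mod/Resc Parsimony Inference} within a factor of $(n+m)^{1/3-\varepsilon}$. Given an instance $G$ of \textsc{Bipartite Biclique Edge Cover} on a total of $N$ vertices, with parts of sizes $n$ and $m$ so that $N = n + m$, I would form $C := A(G)$, run the assumed algorithm to obtain matrices $(M,R)$ with $k'$ columns, and translate them back into a biclique edge cover of $G$ with $k'$ bicliques. Because the optimum is preserved exactly, $k'$ is within a factor of $N^{1/3-\varepsilon}$ of the minimum biclique cover of $G$, contradicting the bound of Gruber and Holzer~\cite{GruberHolzer2007}.

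The main point requiring care --- and the only genuine obstacle --- is the bookkeeping of the size parameter. The Gruber--Holzer bound is stated in terms of the total number of vertices $N$ of the bipartite graph, whereas the statement to be proved uses $n + m$, the combined row and column dimension of $C$. The reduction must therefore map an $n \times m$ matrix precisely to a graph with $n + m$ vertices (which it does, since the two vertex classes have sizes $n$ and $m$), so that the two quantities coincide and the exponent $1/3 - \varepsilon$ transfers without loss. Once this identification is in place, the approximation-preserving and value-exact nature of the reduction makes the remaining argument routine.
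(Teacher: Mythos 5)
Your proposal is correct and follows essentially the same route as the paper, which derives the theorem directly from the equivalence of Theorem~\ref{Theorem: Equivalence} combined with the NP-completeness result of~\cite{Orlin1977} and the $n^{1/3-\varepsilon}$ inapproximability bound of~\cite{GruberHolzer2007}, noting the reduction is approximation-preserving. You merely spell out details the paper leaves implicit (NP membership, the contraposition, and the identification of the total vertex count with $n+m$), all of which are consistent with the paper's argument.
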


\section{Fixed-parameter tractability}
\label{Section: Fixed-parameter tractability}

In this section, we explore a parameterized complexity approach
\cite{DowneyFellows1999,FlumGrohe2006,Niedermeier2006} for the
\textsc{Mod/Resc Parsimony Inference} problem. Due to the equivalence shown in the previous section, we focus
for convenience reasons on \textsc{Bipartite Biclique Edge
Cover}. In parameterized
complexity, problem instances are appended with an additional
parameter, usually denoted by $k$, and the goal is to find an
algorithm for the given problem which runs
in time $f(k) \cdot n^{O(1)}$, where $f$ is an arbitrary
computable function. In our context, our goal is to determine
whether a given input bipartite graph $G$ with $n$ vertices
has a biclique edge cover of size $k$ in time $f(k) \cdot
n^{O(1)}$.

\subsection{The kernelization}\label{s:sskern}

Fleischner \emph{et
al.}~\cite{FleischnerMujuniPaulusmaSzeider2009} studied the
\textsc{Bipartite Biclique Edge Cover} problem in the context
of parameterized complexity. The main result in their paper is to
provide a kernel for the problem based on the techniques given
by Gramm \emph{et al.}~\cite{GrammGuoHuffnerNiedermeier2006}
for the similar \textsc{Clique Edge Cover} problem.
Kernelization is a central technique in parameterized
complexity which is best described as a polynomial-time
transformation that converts instances of arbitrary size to
instances of a size bounded by the problem parameter (usually of
the same problem), while mapping ``yes''-instances to
``yes''-instances, and ``no''-instances to ``no''-instances. More
precisely, a \emph{kernelization algorithm} $\mathcal{A}$ for a
parameterized problem (language) $\Pi$ is a polynomial-time
algorithm such that there exists some computable function
$f$, such that, given an instance $(I,k)$ of $\Pi$, $\mathcal{A}$
produces an instance $(I',k')$ of $\Pi$ with:
\begin{itemize}
\item $|I'| + k' \leq f(k)$, and
\item $(I,k) \in \Pi \iff (I',k') \in \Pi$.
\end{itemize}

We refer the reader to
\emph{e.g.}~\cite{GuoNiedermeier2007,Niedermeier2006} for more
information on kernelization.

A typical kernelization algorithm works with reduction rules,
which transform a given instance to a slightly smaller
equivalent instance in polynomial time. The typical argument
used when working with reduction rules is that once none of
these can be applied, the resultant instance has size bounded
by a function of the parameter. For the \textsc{Bipartite
Biclique Edge Cover}, two kernelization rules have been applied
by Fleischner \emph{et
al.}~\cite{FleischnerMujuniPaulusmaSzeider2009}:\\

\noindent \underline{\textsf{RULE 1}}: If $G$ has a vertex with
no neighbors, remove this vertex without changing the
parameter.

\noindent \underline{\textsf{RULE 2}}: If $G$ has two vertices
with identical neighbors, remove one of these vertices without
changing the parameter.

\begin{lemma}[\cite{FleischnerMujuniPaulusmaSzeider2009}]
\label{Lemma: Kernel}%
Applying rules 1 and 2 of above exhaustively gives a
kernelization algorithm for \textsc{Bipartite Biclique Edge
Cover} that runs in $O(n^3)$ time, and transforms an instance
$(G,k)$ to an equivalent instance $(G',k)$ with $|V(G')| \leq
2^k$ and $|E(G')| \leq 2^{2k}$.
\end{lemma}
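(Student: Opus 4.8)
The plan is to prove the three assertions of the lemma separately: soundness of the two rules (so that the transformation preserves the minimum number of covering bicliques, and hence the yes/no answer with respect to the unchanged parameter $k$), the size bound on the reduced graph, and the running time. I would begin with Rule~1, which is immediate: an isolated vertex is incident to no edge, so it places no demand on a biclique edge cover; deleting it leaves $E(G)$, and therefore the optimal cover size, untouched. It is also worth noting that removing an isolated vertex changes no other vertex's neighborhood, so Rule~1 need never be reconsidered once exhausted.

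Next, and this is the crux, I would establish the soundness of Rule~2. Let $u,u'$ be two vertices with $N(u)=N(u')$ and set $G'=G-u'$. In one direction, any biclique edge cover of $G$ restricts to a cover of $G'$ of no larger size: delete $u'$ from every biclique that contains it; each remains complete bipartite, and every edge of $G'$ (none of which touches $u'$) is still covered. In the reverse direction, given a cover $B_1,\dots,B_k$ of $G'$, I would enlarge it by inserting $u'$ into each $B_\ell$ that already contains $u$. Since $N(u')=N(u)$ and the opposite side of $B_\ell$ consists of neighbors of $u$ — hence of $u'$ — each enlarged $B_\ell$ is still a complete bipartite subgraph of $G$ and introduces no non-edge; and because every edge $\{u,v\}$ was covered by some such $B_\ell$, the corresponding edge $\{u',v\}$ is now covered. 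Thus all edges incident to $u'$ become covered using the same $k$ bicliques, and the two cover numbers coincide.

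For the size bound I would fix an optimal cover $B_1,\dots,B_k$ of the reduced graph $G'$ and assign to each $u\in V_1(G')$ the signature $S(u):=\{\ell : u\in V_1(B_\ell)\}\subseteq\{1,\dots,k\}$. The edges at $u$ are precisely $\bigcup_{\ell\in S(u)}\{u\}\times V_2(B_\ell)$, so $N(u)$ is determined by $S(u)$; therefore two vertices of $V_1(G')$ sharing a signature would have identical neighborhoods, which the exhaustion of Rule~2 forbids. Hence $u\mapsto S(u)$ is injective and $|V_1(G')|\le 2^k$, and symmetrically $|V_2(G')|\le 2^k$; the bipartite estimate $|E(G')|\le |V_1(G')|\cdot|V_2(G')|\le 2^{2k}$ follows at once.

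Finally, for the running time, each rule application deletes exactly one vertex, so there are at most $n$ applications. Isolated vertices are found by a degree scan, and twins are detected by keeping a signature (a hash or sorted list) of each vertex's neighborhood; when a vertex is deleted, only the neighborhoods of its at most $n$ neighbors change, and recomputing and re-testing each against the stored signatures costs $O(n)$ apiece, giving $O(n^2)$ per deletion and $O(n^3)$ in total. The step I expect to be the main obstacle is the soundness of Rule~2: one must check carefully that enlarging each biclique by a twin keeps it complete bipartite and covers no non-edge, while the reverse restriction never forces an increase in the number of bicliques. By comparison, the size bound is a clean counting argument once the signature map is seen to be injective.
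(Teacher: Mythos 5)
Your proof is correct, and it is in fact the only proof available here: the paper does not prove this lemma itself but imports it from~\cite{FleischnerMujuniPaulusmaSzeider2009}, and your argument is a faithful reconstruction of the standard one used there (and in Gramm \emph{et al.} for \textsc{Clique Edge Cover}) --- soundness of twin removal by inserting the deleted twin into every biclique containing its sibling, the injective signature map $u \mapsto \{\ell : u \in V_1(B_\ell)\}$ for the size bound, and maintained neighborhood fingerprints for the $O(n^3)$ running time. One small remark: your counting yields $|V_1(G')| \leq 2^k$ and $|V_2(G')| \leq 2^k$ separately, hence $|V(G')| \leq 2^{k+1}$ in total, which is consistent with the stated edge bound $|E(G')| \leq 2^{2k}$ and with the intended per-class reading of the kernel size, but not with the literal claim $|V(G')| \leq 2^k$; that slack is in the lemma's phrasing, not in your argument (note also that exhaustion of Rule~1 makes every signature nonempty, so each side is in fact bounded by $2^k - 1$).
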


We add two additional rules, which will be necessary
for further interesting properties.

\noindent \underline{\textsf{RULE 3}}: If there is a vertex $v$
with exactly one neighbor $u$ in $G$, then remove both $v$ and
$u$, and decrease the parameter by one.

\begin{lemma}\label{lemma: lrul3}
Rule 3 is correct.
\end{lemma}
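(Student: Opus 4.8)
The plan is to establish that Rule 3 preserves the answer, i.e.\ that the bipartite graph $G$ admits a biclique edge cover with $k$ bicliques if and only if $G' := G - \{u,v\}$ admits one with $k-1$ bicliques. Throughout I would assume without loss of generality that $v \in V_1(G)$ and hence $u \in V_2(G)$, the opposite case being symmetric. The crucial structural fact to isolate first is the following: in any biclique $B$ of $G$ that contains the pendant vertex $v$, the class of $B$ opposite to $v$ must be contained in the neighborhood of $v$, which by hypothesis is the single vertex $u$. Consequently any such $B$ has its $u$-side equal to exactly $\{u\}$, so $B$ is a star centered at $u$ and every edge of $B$ is incident to $u$.

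For the forward direction, suppose $B_1, \ldots, B_k$ is a biclique edge cover of $G$, and let $B_1$ be a biclique covering the edge $\{u,v\}$. By the observation above, every edge of $B_1$ is incident to $u$. I would then delete $u$ and $v$ from each of the remaining bicliques, setting $B_\ell' := B_\ell - \{u,v\}$ for $2 \le \ell \le k$; each $B_\ell'$ is still a biclique, being an induced sub-biclique of $B_\ell$. It remains to check that these $k-1$ bicliques cover $E(G')$: any edge of $G'$ avoids both $u$ and $v$, so it cannot be an edge of $B_1$ (all of whose edges meet $u$); hence it is covered by some $B_\ell$ with $\ell \ge 2$, and since its endpoints survive in $G'$ it is also covered by $B_\ell'$. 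This yields a cover of $G'$ with $k-1$ bicliques.

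For the reverse direction, suppose $B_1', \ldots, B_{k-1}'$ cover $E(G')$. The edges of $G$ missing from $G'$ are precisely those incident to $u$ or $v$; since $v$ has $u$ as its only neighbor, these are exactly the edges $\{u,w\}$ where $w$ ranges over the set $N(u)$ of neighbors of $u$. I would cover all of them at once by adding the single star biclique $B_k$ with vertex classes $\{u\}$ and $N(u)$, which is complete bipartite and contains $\{u,v\}$ because $v \in N(u)$. Together with $B_1', \ldots, B_{k-1}'$ (which are subgraphs of $G' \subseteq G$) this gives a biclique edge cover of $G$ with $k$ bicliques.

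The main obstacle is the forward direction, and more precisely the need for the forced-star observation: it is what guarantees that the biclique spent on the pendant edge can be removed together with $u$ without damaging the coverage of any edge that remains in $G'$. Once that structural fact is in place, both directions reduce to routine bookkeeping about which edges are incident to $u$ and $v$, and care only needs to be taken with the symmetric case where the pendant vertex lies in $V_2(G)$ instead.
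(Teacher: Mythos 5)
Your proof is correct and rests on the same key observation as the paper's: any biclique covering the pendant edge $\{u,v\}$ must have its side opposite $v$ contained in $N(v)=\{u\}$, hence is a star centered at $u$. You are in fact somewhat more complete than the paper's terse argument, which only sketches the forward direction (normalizing the cover by extending the star to absorb all edges of $u$) and leaves the lifting of a $(k-1)$-cover of $G-\{u,v\}$ back to a $k$-cover of $G$ implicit, whereas you spell out both directions explicitly.
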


\begin{proof}
Assume a biclique cover of size $k$ of the graph, and assume
that vertex $v$ is a member of some of the bicliques in this
cover. By definition, at least one of the bicliques covers the
edge 
$\{u,v\}$. Since this is the only edge 
adjacent to $v$, the bicliques that cover
$\{u,v\}$ include only vertex $u$ among the vertices in its
bipartite vertex class.
If the bicliques do not cover all the
edges of $u$, add them to each of the bicliques. \qed
\end{proof}

\noindent \underline{\textsf{RULE 4}}: If there is a vertex $v$
in $G$ which is adjacent to all vertices in the opposite
bipartition class of $G$, then remove $v$ without decreasing
the parameter.

\begin{lemma}\label{Lemma: lrul4}
Rule 4 is correct.
\end{lemma}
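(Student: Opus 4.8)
The plan is to establish that Rule 4 preserves the answer, i.e. that the graph $G$ admits a biclique edge cover of size $k$ if and only if the graph $G-v$ obtained by deleting $v$ does. Assume without loss of generality that $v \in V_1(G)$ and that $v$ is adjacent to every vertex of $V_2(G)$. I would prove the two implications separately, since they call for opposite manipulations of a given cover.

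For the forward implication, suppose $B_1,\ldots,B_k$ is a biclique edge cover of $G$. I would simply delete $v$ from each $B_\ell$ in which it occurs. Deleting a vertex from a complete bipartite graph leaves a complete bipartite graph, so each resulting $B_\ell'$ is again a biclique, and a subgraph of $G-v$. Since $E(G-v)$ is exactly $E(G)$ minus the edges incident to $v$, and the original family covered all of $E(G)$, the restricted family $B_1',\ldots,B_k'$ still covers every edge of $G-v$, using at most $k$ bicliques.

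The reverse implication is the substantive direction. Starting from a biclique edge cover $B_1,\ldots,B_k$ of $G-v$, I would add $v$ to $V_1(B_\ell)$ for every $\ell$. The place where the hypothesis on $v$ is used is that, since $v$ is adjacent in $G$ to all of $V_2(G)$ and in particular to every vertex of $V_2(B_\ell)$, each augmented subgraph is still a biclique and still a subgraph of $G$. All edges already covered remain covered, and the only edges of $G$ absent from $G-v$ are the edges $\{v,w\}$ with $w \in V_2(G)$; such an edge becomes covered precisely when $w$ lies in $V_2(B_\ell)$ for some $\ell$.

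The main obstacle is therefore to guarantee that every neighbour $w$ of $v$ occurs in at least one biclique of the cover of $G-v$. This is immediate for any $w$ that is not isolated in $G-v$, because then some edge incident to $w$ must be covered, placing $w$ in a biclique. The only troublesome case is a vertex whose unique neighbour in $G$ is $v$, which becomes isolated once $v$ is removed. I would dispose of this case by appealing to the order in which the reduction rules are applied: after Rules 1 and 3 have been applied exhaustively, no vertex has degree $0$ or $1$, so every neighbour of $v$ retains another neighbour after $v$ is deleted and hence cannot become isolated in $G-v$. This dependence on the preprocessing context is the delicate part of the argument, and is where I would be most careful to make the precondition explicit.
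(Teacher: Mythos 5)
Your proof is correct and follows essentially the same route as the paper's: add $v$ to every biclique of a cover of $G-v$, and use the fact that after Rules~1 and~3 every vertex of $V_2(G)$ has degree at least $2$, so each neighbour of $v$ already lies in some biclique and every edge $\{v,w\}$ gets covered. You are slightly more explicit than the paper in spelling out the trivial deletion direction and in flagging the dependence on the earlier rules, but the key idea and its justification coincide.
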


\begin{proof}
After applying rule 3 above, each remaining vertex in the graph
has at least two neighbors. Assume a biclique cover of size $k$
of all the edges except those adjacent to vertex $v$.
Assume w.l.o.g. that $v \in V_1(G)$. Since each vertex
$u \in V_2(G)$ has degree at least 2, it is adjacent to an edge
which is covered by the biclique cover. It therefore belongs to
some biclique in this cover. For each biclique in the cover,
add now vertex $v$ to its set of vertices. Since $v$ is
adjacent to all the vertices of 
$V_2(G)$, each changed component is a correct biclique and the
new solution covers all the edges, including those of vertex
$v$, and is of same size. \qed
\end{proof}

Regarding the time complexity of the new rules we introduced,
it is clear that once a vertex has been found in which a rule
should be applied, applying each rule takes $O(n)$ time. Thus,
including the time necessary to find such a vertex, the time
required for each rule is $O(n)$. 
Since one can apply the
reduction rules at most $O(n)$ time, the total time required
for our extended kernelization remains $O(n^3)$. We remark that
although 
the new rules do not change the kernelization size, which
remains $2^k$ vertices in a solution of size $k$, they will be
useful in the following section.

\subsection{\textsc{Bipartite Biclique Edge Cover} and \textsc{Clique Edge Cover}}\label{s:ssrbeccec}

In this section, we show the connection between the \textsc{Bipartite Biclique
Edge Cover} and the \textsc{Clique Edge Cover} problems. We show that in the context of
fixed-parameter tractability, we can easily translate our problem to
the classical clique covering problem and then use it for a solution to our problem. For instance, it gives another way for the kernelization of the problem and can provide interesting heuristics, mentioned in~\cite{GrammGuoHuffnerNiedermeier2006}. 

Given a kernelized bipartite graph $G'$ as an instance to the
\textsc{Bipartite Biclique Edge Cover} problem, we transform
$G'$ into a (non-bipartite) graph $G''$ defined by $V(G''):=V(G')$
and $E(G''):=E(G') \cup \{\{u,v\} : u,v \in V_1(G') \textrm{ or }
u,v \in V_2(G')\}$.

\begin{theorem}\label{t:qmqh}
The edges of $G'$ can be covered with $k$ cliques iff the edges
of $G''$ can be covered with $k+2$ cliques.
\end{theorem}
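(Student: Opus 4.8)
The plan is to exploit the tight dictionary between cliques of $G''$ and bicliques of $G'$ that the construction builds in. Write $X:=V_1(G')$ and $Y:=V_2(G')$. The first thing I would record is this correspondence: if $Q$ is any clique of $G''$, then setting $A:=Q\cap X$ and $B:=Q\cap Y$ gives a biclique $(A,B)$ of $G'$, because every cross pair $\{a,b\}$ with $a\in A$, $b\in B$ is an edge of the clique $Q$, and the only cross edges present in $G''$ are precisely those of $G'$. Conversely, the vertex set $A\cup B$ of any biclique $(A,B)$ of $G'$ induces a clique in $G''$: the within-$X$ and within-$Y$ pairs are edges of $G''$ by construction, and the cross pairs are edges of $G'\subseteq G''$. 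Thus the \emph{straddling} cliques of $G''$ (those meeting both $X$ and $Y$) are exactly the bicliques of $G'$, while the cross edges of $G''$ are exactly $E(G')$. Everything below runs through this identification.

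For the direction $(\Longrightarrow)$ I would take a biclique edge cover $B_1,\dots,B_k$ of $G'$. By the dictionary each vertex set $V_1(B_\ell)\cup V_2(B_\ell)$ is a clique of $G''$, and these $k$ cliques together cover precisely the cross edges of $G''$, i.e.\ $E(G')$. It then only remains to cover the within-class edges; since $G''$ turns $X$ and $Y$ into complete graphs, the two \emph{pure} cliques $X$ and $Y$ cover all within-$X$ and all within-$Y$ edges, one clique each. Adjoining them produces a clique edge cover of $G''$ with $k+2$ cliques, which is the routine half of the argument.

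For $(\Longleftarrow)$ I would start from a clique edge cover $Q_1,\dots,Q_{k+2}$ of $G''$ and read off bicliques. Every cross edge $\{u_i,v_j\}$ lies in some $Q_\ell$, which is therefore straddling, so by the dictionary the straddling members of the cover already form a biclique cover of $E(G')$. The point is then to guarantee that \emph{at most $k$} of the $k+2$ cliques straddle, equivalently that at least two of them are pure, so that deleting the two pure cliques leaves a biclique cover of $G'$ of size $k$. This counting/normalization step is the crux, and it is exactly here that I expect rules~3 and~4 to be needed: once they have been applied exhaustively, every vertex has degree at least two and no vertex is adjacent to all of the opposite class, so — by the same reasoning as in the correctness proof of rule~4 — no single straddling clique can absorb an entire side. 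I would leverage this to argue that the two internal complete graphs on $X$ and on $Y$ each force a clique contributing nothing to the cross edges, so that an optimal cover can be normalized to contain the two pure cliques $X$ and $Y$ without any increase in size. The main obstacle is precisely this normalization: the easy observation only yields \emph{some} biclique cover of size up to $k+2$, and converting the additive slack of two into a guarantee that two cliques may be assumed pure (rather than cleverly shared between internal and cross edges) is the delicate part of the proof, and the only place where the extended kernelization rules actually earn their keep.
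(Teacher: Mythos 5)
Your forward direction is fine and matches the paper's. The problem is the backward direction, and you have in fact located it precisely: you never prove the normalization that at least two of the $k+2$ cliques may be assumed pure, you only say you ``would leverage'' rules~3 and~4 to get it. The ingredient you plan to use is too weak: rule~4 guarantees that every straddling clique meets $X$ (respectively $Y$) in a \emph{proper} subset, but proper subsets can still jointly cover all within-side pairs, so no pure clique is forced by this alone. Concretely, the missing step is not just hard --- it is false. Take $G'$ to be the hexagon $C_6$ with $X=\{x_1,x_2,x_3\}$, $Y=\{y_1,y_2,y_3\}$ and edges $x_1y_1,\, y_1x_2,\, x_2y_2,\, y_2x_3,\, x_3y_3,\, y_3x_1$. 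This graph survives rules~1--4 (all degrees equal $2$, no twins, no vertex adjacent to the whole opposite side). Every biclique of $C_6$ has at most two edges, so its biclique cover number is $3$. Yet $G''$ is covered by the \emph{four} cliques $\{x_1,x_2,y_1\}$, $\{x_2,x_3,y_2\}$, $\{x_3,x_1,y_3\}$, $\{y_1,y_2,y_3\}$: the three mixed triangles cover all six cross edges \emph{and} all three within-$X$ edges, so only one pure clique is ever needed. Hence $G''$ has a clique edge cover of size $k+2=4$ while $G'$ has no biclique edge cover of size $k=2$, and the ``if'' direction of the statement fails on a fully kernelized instance.

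It is worth knowing that the paper's own proof makes exactly the leap you balked at: it asserts ``it follows that there must be at least two cliques $K_1,K_2$ with $V(K_1)\subseteq V_1(G')$ and $V(K_2)\subseteq V_2(G')$'' as a consequence of rule~4, with no further justification, and the $C_6$ cover above shows that inference is invalid (that cover contains only one pure clique). So your instinct that this is ``the delicate part and the only place where the extended kernelization rules earn their keep'' was exactly right, and your refusal to wave it through is to your credit; but the gap cannot be filled as stated. What survives of the argument is only the two inequalities your dictionary does establish: the straddling members of any clique cover of $G''$ restrict to a biclique cover of $G'$, and any biclique cover of $G'$ plus the two pure cliques covers $G''$; together these give that the clique cover number of $G''$ lies between the biclique cover number of $G'$ and that number plus two, with the $C_6$ example realizing the intermediate value and refuting the exact additive-$2$ equivalence.
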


\begin{proof}
Suppose $B_1,\ldots,B_k$ is a biclique edge cover of $G'$. Then
each $V(B_i)$, $i \in \{1,\ldots,k\}$, induces a clique in
$G''$. Furthermore, the only remaining edges which are not
covered in $G''$ are the ones between vertices in $V_1(G')$ and
$V_2(G')$, which can be covered by the two cliques induced by
these vertex sets in $G''$. Altogether this gives us $k+2$
cliques that cover all edges in $G''$. Conversely, take a clique
edge cover $K_1,\ldots,K_c$ of $G''$. Due to the fourth
kernelization rule, we know that there is no vertex in $V_1(G')$
which is connected to all vertices in $V_2(G')$, and vice-versa,
in both $G'$ and $G''$. It follows that there must be at least
two cliques in $\{K_1,\ldots,K_c\}$, say $K_1$ and $K_2$, with
$V(K_1) \subseteq V_1(G')$ and $V(K_2) \subseteq V_2(G')$. Thus,
there is a subset of the cliques in $\{K_3,\ldots,K_c\}$ which
have vertices in both partition classes of $G'$, and which cover
all the edges in $G'$. Taking the corresponding bicliques in
$G'$, and adding duplicated bicliques if necessary, gives us $k$ bicliques
that cover all edges in $G'$. \qed
\end{proof}

\subsection{Algorithms}\label{s:sssa}

After the kernelization algorithm is applied, the next step
is usually to solve the problem using brute-force. This is what is done
in~\cite{FleischnerMujuniPaulusmaSzeider2009}. However, the
time complexity given there is inaccurate, and the
parametric-dependent time bound of their algorithm is
$O(k^{4^k}2^{3k})=O(2^{2^{2k}\lg k+3k})$ instead of the $O(2^{2k^2+3k})$ bound 
stated in their paper. Furthermore, the algorithm they describe is
initially given for the related \textsc{Bipartite Biclique Edge
  Partition} problem (where each edge is allowed to appear exactly once in a
biclique), and the adaptation of such algorithm to the \textsc{Bipartite Biclique
Edge Cover} problem is left vague and imprecise.
Here, we suggest two possible brute-force procedures for the \textsc{Bipartite Biclique
Edge Cover} problem, each of which outperforms the algorithm
of~\cite{FleischnerMujuniPaulusmaSzeider2009} in the worst-case. We assume
throughout that we are working
with a kernelized instance obtained by applying the algorithm described in Section~\ref{s:sskern}, \emph{i.e.} a pair $(G',k)$ where $G'$ is a bipartite graph with
at most $2^k$ vertices (and consequently at most $4^k$ edges).

\paragraph{The first brute-force algorithm:}
For each $k' \leq k$, try all possible partitions of the
edge-set $E(G')$ of $G'$ into $k'$ subsets. For each such
partition $\Pi =\{E_1,\ldots,E_{k'}\}$, check whether each of the
subgraphs $G'[E_1],\ldots,G'[E_{k'}]$ is a biclique, where
$G'[E_i]$ is the subgraph of $G$ induced by $E_i$. If yes, report
$G'[E_1],\ldots,G'[E_{k'}]$ as a solution. If some $G'[E_i]$ is
not a biclique, check whether edges in $E(G') \setminus E(G'_i)$
can be added to $E[G'_i]$ in order to make the graph a biclique.
Continue with the next partition if some graph in
$G'[E_1],\ldots,G'[E_{k'}]$ cannot be appended in this way in
order to get a biclique, and otherwise report the solution
found. Finally, if the above procedure fails for all partitions
of $E(G')$ into $k' \leq k$ subsets, report that $G'$ does not have a
biclique edge cover of size $k$.

\begin{lemma}
\label{Lemma: Correctness}
The above algorithm correctly determines whether $G'$ has a
bipartite biclique edge cover of size~$k$ in time
$\frac{2^{2^{2k}\lg k+2k+\lg k}}{k!}$.
\end{lemma}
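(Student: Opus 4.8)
The claim has two parts: correctness of the brute-force procedure, and its running time bound $\frac{2^{2^{2k}\lg k+2k+\lg k}}{k!}$. My plan is to handle correctness first, then count.

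For correctness, the key observation is that if $G'$ has a biclique edge cover $B_1,\ldots,B_{k'}$ with $k' \le k$, then the edge sets $E(B_1),\ldots,E(B_{k'})$ cover $E(G')$ but need not partition it (edges may be shared between bicliques). I would argue that from any such cover one can extract a genuine \emph{partition} $\Pi=\{E_1,\ldots,E_{k'}\}$ of $E(G')$ — for instance by assigning each edge to the lowest-indexed biclique containing it — such that each $E_i \subseteq E(B_i)$. Then $G'[E_i]$ is a subgraph of the biclique $B_i$, so although $G'[E_i]$ may itself fail to be a biclique (it is a bipartite graph that is a subgraph of a complete bipartite graph), it can be \emph{completed} to the biclique $B_i$ by adding back the missing edges of $E(G')$ that lie inside $V_1(B_i)\times V_2(B_i)$. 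This is precisely the ``append'' step the algorithm performs. Conversely, if the algorithm reports a solution for some partition, each reported subgraph is a biclique and their union is $E(G')$, so a cover of size $\le k$ exists. This establishes the ``iff''.

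For the running time, the dominant cost is enumerating all partitions of the edge set into $k'$ parts. Since the kernelized graph has at most $2^k$ vertices and hence at most $(2^k)^2 = 2^{2k} = 4^k$ edges, and the number of partitions of an $N$-element set into at most $k$ blocks is governed by the Stirling numbers, I would bound the count of ordered-then-unordered partitions of $N=4^k$ edges into $k'\le k$ blocks by roughly $\frac{(k')^N}{k'!} \le \frac{k^{4^k}}{k!}$. Taking logarithms, $k^{4^k}=2^{4^k\lg k}=2^{2^{2k}\lg k}$, which produces the leading term. The remaining factors $2^{2k}$ and $2^{\lg k}=k$ in the exponent come from the polynomial per-partition work — checking each candidate subgraph against the biclique condition and performing the completion step costs time polynomial in the number of edges $4^k$ and the number of parts $k$ — and from summing over the $k$ values of $k'$. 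Collecting these into the exponent gives $2^{2^{2k}\lg k + 2k + \lg k}$, and the division by $k!$ reflects that partitions are unordered.

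The main obstacle I anticipate is the correctness argument rather than the arithmetic: specifically, showing rigorously that the ``append'' operation never spuriously succeeds and never spuriously fails. I must verify that when $G'[E_i]$ is extended by adding edges of $E(G')$ to make it a biclique, the added edges are genuinely present in $G'$ (so we are covering only real edges), and that the resulting vertex classes $V_1(B_i), V_2(B_i)$ are exactly the sets forced by $E_i$. The subtle direction is confirming that \emph{every} biclique cover of size $\le k$ is detected — i.e.\ that the partition-plus-completion search is exhaustive over all covers, not just over edge-disjoint ones. The edge-assignment argument above is what closes this gap, and I would state it carefully as the crux of the proof.
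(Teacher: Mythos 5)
Your proposal matches the paper's own proof essentially step for step: your ``assign each edge to the lowest-indexed biclique containing it'' construction is exactly the paper's partition $E_i := E(B_i) \setminus \bigcup_{j<i} E(B_j)$, and your counting combines the same Stirling-number bound $S(2^{2k},k) = O\bigl(k^{4^k}/k!\bigr) = O\bigl(2^{2^{2k}\lg k}/k!\bigr)$ with the same $O(2^{2k+\lg k})$ per-partition completion cost. Your added care about why the append step is sound (the completion of $G'[E_i]$ uses only edges of $B_i$, hence of $G'$) just makes explicit what the paper leaves implicit, so this is the same argument, correctly executed.
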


\begin{proof}
Correctness of the above algorithm is immediate in case a
solution is found. To see that the algorithm is also correct
when it reports that no solution can be found, observe that for
any biclique edge cover $B_1,\ldots,B_k$ of $G$, the set
$\{E_1,\ldots,E_k\}$ with $E_i := E(G'_i) \setminus \bigcup_{j <
i} E(G'_j)$ defines a partition of $E(G')$ (with some of the $E_i$'s
possibly empty), and given this partition, the algorithm above
would find the biclique edge cover of $G'$. Correctness of the
algorithm thus follows.

Regarding the time complexity, the time needed for appending
edges to each subgraph is at most $O(|(V(G'))^2|)=O(2^{2k})$, and thus
a total of $O(2^{2k}k)=O(2^{2k+\lg k})$ time is required for the entire partition. The 
number of possible partitions of $E(G')$ into $k$ disjoint set
is the \emph{Stirling number of the second kind} $S(2^{2k},k)$,
which has been shown in~\cite{Korshunov1983} to be
asymptotically equal to
$O(\frac{k^{4^k}}{k!} = \frac{2^{2^{2k}\lg k}}{k!})$. Thus, the 
total complexity of the algorithm is $\frac{2^{2^{2k}\lg k+2k+\lg k}}{k!}$. 
\qed
\end{proof}

\paragraph{The second brute-force algorithm:}
We generate
the set $\mathcal{K}(G')$ of all possible inclusion-wise maximal
bicliques in $G'$, and try all possible $k$-subsets of
$\mathcal{K}(G')$ to see whether one covers all edges in $G'$.
Correctness of the algorithm is immediate
since one can always restrict oneself to using only
inclusion-wise maximal bicliques in a biclique
edge cover. To generate all maximal
bicliques, we first transform $G'$ into the graph $G''$ given in
Theorem~\ref{t:qmqh}. Thus, every
inclusion-wise maximal biclique in $G'$ is an inclusion-wise
maximal clique in $G''$. We then use the algorithm
of~\cite{TsukiyamaIdeAriyoshiShirakawa1977} on the complement
graph $\overline{G''}$ of $G''$, \emph{i.e.} the graph defined
by $V(\overline{G''}):=V(G'')$ and $E(\overline{G''}) : =
\{\{u,v\} : u,v \in V(\overline{G''}), u \neq v, \textrm{ and }
\{u,v\} \notin E(G'') \}$.

\begin{theorem}
The \textsc{Bipartite Biclique Edge Cover} problem can be
solved in $O(f(k) + n^3)$ time, where $f(k):=2^{k2^{k-1}+3k}$.
\end{theorem}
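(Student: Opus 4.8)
The plan is to analyze the second brute-force procedure by splitting its running time into the kernelization phase and the work performed on the kernel. By Lemma~\ref{Lemma: Kernel} (together with the extra rules of Section~\ref{s:sskern}), the kernelization runs in $O(n^3)$ time and produces an equivalent instance $(G',k)$ in which $G'$ has at most $2^k$ vertices and hence at most $4^k$ edges. Since the original instance is mapped to this kernel in $O(n^3)$ time and all subsequent computation is carried out on $G'$, it suffices to bound the cost of the enumerate-and-test step on $G'$ by $f(k)=2^{k2^{k-1}+3k}$; the overall bound $O(f(k)+n^3)$ then follows immediately.

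The first key step is to bound the number $N_B$ of inclusion-wise maximal bicliques of $G'$. I would argue combinatorially: a maximal biclique is completely determined by its vertex set on one fixed side, since the vertices it contains on the other side are forced to be exactly their common neighbourhood. Hence $N_B \le 2^{\min(|V_1(G')|,|V_2(G')|)} \le 2^{|V(G')|/2} \le 2^{2^{k-1}}$. To actually generate them within this budget I would use the correspondence from Theorem~\ref{t:qmqh}: a vertex set induces a maximal biclique in $G'$ exactly when it induces a maximal clique in $G''$, which in turn is a maximal independent set of $\overline{G''}$. Running the algorithm of~\cite{TsukiyamaIdeAriyoshiShirakawa1977} on $\overline{G''}$ enumerates these sets with only polynomial-in-$|V(G')|$ delay per output, i.e.\ at cost $O(2^{3k})$ per biclique, so all maximal bicliques are produced in $2^{2^{k-1}}\cdot 2^{3k}$ time.

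It then remains to combine the pieces. Because an optimal cover may be assumed to consist only of maximal bicliques (extending a biclique never uncovers an edge), the algorithm is correct once it tries every $k$-subset of the maximal-biclique family. The number of such subsets is at most $\binom{N_B}{k}\le N_B^{\,k}\le 2^{k2^{k-1}}$, using the bound $N_B \le 2^{2^{k-1}}$ from the previous step. Testing whether a fixed $k$-subset covers all edges of $G'$ costs $O(|V(G')|\cdot|E(G')|)=O(2^{3k})$, so the test phase runs in $2^{k2^{k-1}}\cdot 2^{3k}=2^{k2^{k-1}+3k}=f(k)$, which dominates the generation cost. Adding the $O(n^3)$ kernelization time yields the claimed $O(f(k)+n^3)$ bound.

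The step I expect to be the main obstacle is the bound $N_B \le 2^{2^{k-1}}$ on the number of maximal bicliques and, more delicately, confirming that the polynomial per-output overhead of the enumeration algorithm and the per-subset coverage test both fold cleanly into the $2^{3k}$ factor without inflating the double-exponential exponent. A careful check is also needed that the correspondence of Theorem~\ref{t:qmqh} correctly handles the two degenerate cliques $V_1(G')$ and $V_2(G')$ (which are not genuine bicliques), so that they are simply discarded rather than counted among the candidates.
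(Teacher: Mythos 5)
Your proposal is correct and follows essentially the same route as the paper's proof: kernelize in $O(n^3)$ time to an instance with at most $2^k$ vertices, enumerate the at most $2^{2^{k-1}}$ inclusion-wise maximal bicliques of $G'$ as maximal cliques of $G''$ (equivalently, maximal independent sets of $\overline{G''}$) via the algorithm of Tsukiyama \emph{et al.}, and then test all $O(2^{k2^{k-1}})$ $k$-subsets at cost $O(|V(G')||E(G')|)=O(2^{3k})$ each, giving $O(2^{k2^{k-1}+3k}+n^3)$ overall. Your two refinements --- proving the $2^{n/2}$ bound on maximal bicliques directly from the fact that one side determines the other (where the paper simply cites Prisner), and explicitly discarding the two degenerate maximal cliques induced by $V_1(G')$ and $V_2(G')$ in $G''$ (a point the paper leaves implicit) --- are welcome but do not change the approach.
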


\begin{proof}
Given a bipartite graph $G$ as an instance to \textsc{Bipartite
Biclique Edge Cover}, we first apply the kernelization algorithm
to obtain an equivalent graph $G'$ with $2^k$ vertices, and then apply the brute-force algorithm
described above to determine whether $G'$ has a biclique edge
cover of size $k$. Correctness of this algorithm follows
directly from Section~\ref{s:sskern} and the correctness of
the brute-force procedure. To analyze the time complexity of
this algorithm, we first note that Prisner showed that any
bipartite graph on $n$ vertices has at most $2^{n/2}$
inclusion-wise maximal
bicliques~\cite{TsukiyamaIdeAriyoshiShirakawa1977}. This
implies that $|\mathcal{K}(G')|  \leq 2^{2^{k-1}}$. The
algorithm of~\cite{Prisner2000} runs in
$O(|V(G')||E(G')||\mathcal{K}(G')|)$ time, which is
$O(2^k2^{2k}2^{2^{k-1}})=O(2^{2^{k-1}+3k})$. Finally, the total
number of $k$-subsets of $\mathcal{K}(G')$ is
$O(2^{k2^{k-1}})$, and checking whether each of these subsets
covers the edges of $G'$ requires $O(|V(G')||E(G')|)=O(2^{3k})$
time. Thus, the total time complexity of the entire algorithm
is $O(2^{2^{k-1}+3k} + 2^{k2^{k-1}+3k} + n^3) =
O(2^{k2^{k-1}+3k} + n^3).$ \qed
\end{proof}

It is worthwhile mentioning that some particular bipartite
graphs have a number of inclusion-wise maximal bicliques, which
is polynomial in the number of their vertices. For these types
of bipartite graphs, we could improve on the worst-case analysis
given in the theorem above. For instance, a bipartite chordal
graph $G$ has at most $|E(G)|$ inclusion-wise maximal
bicliques~\cite{TsukiyamaIdeAriyoshiShirakawa1977}. A bipartite
graph with $n$ vertices and no induced cocktail-party graph of
order $\ell$ has at
most $n^{2(\ell-1)}$ inclusion-wise maximal
bicliques~\cite{Prisner2000}. The cocktail party graph of order $\ell$
is the graph with nodes consisting of two rows of paired nodes in which all nodes but the paired ones are connected with a graph edge
(for a full definition, see~\cite{Prisner2000}). Observing that the algorithm in
Section~\ref{s:sskern} preserves cordiality and does not
introduce any new cocktail-party induced subgraphs, we obtain
the following corollary:

\begin{corollary}
The \textsc{Bipartite Biclique Edge Cover} problem can be
solved in $O(2^{2k^2+3k} + n^3)$ time when restricted to
chordal bipartite graphs, and in $O(2^{2k^2(\ell-1)+3k} + n^3)$
time when restricted to bipartite graphs with no induced
cocktail-party graphs of order $\ell$.
\end{corollary}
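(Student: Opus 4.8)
The plan is to reuse the second brute-force algorithm together with its associated analysis, but to feed into it the sharper upper bounds on the number of inclusion-wise maximal bicliques that hold for the two restricted graph classes. Recall that the running time of that algorithm is governed entirely by $|\mathcal{K}(G')|$: generating the maximal bicliques costs $O(|V(G')|\,|E(G')|\,|\mathcal{K}(G')|)$ time, and then enumerating all $k$-subsets of $\mathcal{K}(G')$ and testing each for coverage costs $O(|\mathcal{K}(G')|^k \cdot |V(G')|\,|E(G')|)$ time. Since the kernel satisfies $|V(G')| \le 2^k$ and $|E(G')| \le 2^{2k}$, it therefore suffices to bound $|\mathcal{K}(G')|$ in each case and then propagate these bounds through the two expressions, absorbing both the generation cost and the $O(n^3)$ kernelization cost into the final estimate.

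For the chordal bipartite case, I would first invoke the observation that the reduction rules of Section~\ref{s:sskern} preserve chordality, so that the kernel $G'$ is itself chordal bipartite. By the bound for chordal bipartite graphs, $|\mathcal{K}(G')| \le |E(G')| \le 2^{2k}$. Substituting this into the subset-enumeration term gives $|\mathcal{K}(G')|^k \le 2^{2k^2}$, and multiplying by the per-subset check $O(2^{3k})$ yields $O(2^{2k^2+3k})$. The generation step then costs only $O(2^k \cdot 2^{2k} \cdot 2^{2k}) = O(2^{5k})$, which is dominated by $2^{2k^2+3k}$ for every $k \ge 1$, so the total is $O(2^{2k^2+3k} + n^3)$, as claimed.

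For the cocktail-party-free case, the argument has the same shape. I would first note that the kernelization does not create any new induced cocktail-party subgraph, so $G'$ still has no induced cocktail-party graph of order $\ell$. The corresponding bound gives $|\mathcal{K}(G')| \le |V(G')|^{2(\ell-1)} \le (2^k)^{2(\ell-1)} = 2^{2k(\ell-1)}$. The subset-enumeration term is then $|\mathcal{K}(G')|^k \le 2^{2k^2(\ell-1)}$, and multiplying by $O(2^{3k})$ gives $O(2^{2k^2(\ell-1)+3k})$; the generation cost $O(2^{3k+2k(\ell-1)})$ is again dominated, producing the stated $O(2^{2k^2(\ell-1)+3k} + n^3)$ bound.

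The main obstacle is not the arithmetic, which is routine once the maximal-biclique bounds are in hand, but rather the structural preservation claim: one must verify carefully that \textsf{RULES 1--4} of Section~\ref{s:sskern} keep the kernel inside the relevant graph class. For chordality the delicate point is that deleting a vertex (\textsf{RULES 1}, \textsf{3}, \textsf{4}) or identifying vertices with identical neighborhoods (\textsf{RULE 2}) cannot introduce an induced cycle of length at least six; for the cocktail-party condition one must argue that no induced cocktail-party graph of order $\ell$ can appear in the kernel that was absent from the input. I would discharge these by observing that vertex deletion and the removal of a vertex with identical neighborhood are operations under which the families of forbidden induced subgraphs in question are closed, so no forbidden configuration can be newly created, which is exactly the observation recorded just before the corollary.
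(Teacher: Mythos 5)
Your proposal is correct and follows essentially the same route the paper intends: the paper derives the corollary by plugging the class-specific bounds on $|\mathcal{K}(G')|$ (at most $|E(G')| \le 2^{2k}$ for chordal bipartite graphs, at most $|V(G')|^{2(\ell-1)} \le 2^{2k(\ell-1)}$ for graphs with no induced cocktail-party graph of order $\ell$) into the analysis of the second brute-force algorithm, relying on the observation that the kernelization preserves membership in both classes. Your additional remark that all four reduction rules are vertex deletions, so both hereditary (induced-subgraph-closed) properties are automatically preserved, correctly fills in the detail the paper leaves as an observation.
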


\section{Experimental Results}
\label{Section: Experimental Results}

We performed experiments of the parameterized algorithms on the {\it
Culex pipiens} dataset, given in Table~\ref{t:cdata}. We implemented
the algorithms in the C++ programming language, with source code of
approximately 2500 lines.



The main difficulty in practice is to find the minimal size
$k$. Different approaches could be used. One would proceed by
first checking if there is no
solution of small sizes since this is easy to check using the $FPT$
approach, and then increasing the size
until reaching a smallest size $k$ for which one
solution exists. Another would proceed by using
different fast and efficient heuristics to discover
a solution of a given size $k'$ that in general will be greater than
the optimal size $k$ sought. Then applying dichotomy (the optimal
solution is between 1 and $k'-1$), the minimal size could be found
using the $FPT$ approach for the middle value between 1 and $k'-1$,
and so on.
The source code and the
results can be viewed on the webpage {\em
http://lbbe.univ-lyon1.fr/-Nor-Igor-.html}.


The result obtained on the {\it Culex pipiens}
dataset indicates that $8$ pairs of mod/resc genes are required to explain the
dataset. This appear to be in sharp contrast to more simple patterns
seen in other host
species~\cite{Mercot2004,Dobson2001,Bordenstein2007}
that had led to the general belief that cytoplasmic incompatibility
can be explained with a single pair of mod / resc genes. In biological
terms, this result means that contrary to earlier beliefs, the number
of genetic determinants of cytoplasmic incompatibility present in a
single {\it Wolbachia} strain can be large, consistent with the view that it
might involve repeated genetic elements such as transposable elements
or phages. \\\\\\

\bibliographystyle{plain}
\bibliography{paper}

\begin{thebibliography}{10}

\bibitem{Bordenstein2007}
S.R. Bordenstein and J.H. Werren.
\newblock Bidirectional incompatibility among divergent {\it wolbachia} and
  incompatibility level differences among closely related {\it wolbachia} in
  {\it nasonia}.
\newblock {\em Heredity}, Sep(99(3)):278--87, 2007.

\bibitem{Mercot2004}
H.~Mer\c cot and S.~Charlat.
\newblock {\it Wolbachia} infections in {\it drosophila melanogaster} and {\it
  d. simulans}: polymorphism and levels of cytoplasmic incompatibility.
\newblock {\em Genetica}, 120(1-3):51--9, 2004 Mar.

\bibitem{Dobson2001}
S.L. Dobson, E.J. Marsland, and W.~Rattanadechakul.
\newblock {\it Wolbachia}-induced cytoplasmic incompatibility in single- and
  superinfected {\it aedes albopictus} (diptera: {\it Culicidae}).
\newblock {\em J Med Entomol.}, May(38(3):382--7, 2001.

\bibitem{DowneyFellows1999}
R.G. Downey and M.R. Fellows.
\newblock {\em Parameterized Complexity}.
\newblock Springer-Verlag, 1999.

\bibitem{Duron2006}
O.~Duron, C.~Bernard, S.~Unal, A.~Berthomieu, C.~Berticat, and M.~Weill.
\newblock Tracking factors modulating cytoplasmic incompatibilities in the
  mosquito {\it culex pipiens}.
\newblock {\em Mol Ecol.}, Sep(15(10)):3061--3071, 2006.

\bibitem{Engelstadter2009}
J.~Engelstadter and G.D.D. Hurst.
\newblock The ecology and evolution of microbes that manipulate host
  reproduction.
\newblock {\em Annual Review of Ecology, Evolution and Systematics},
  (40):127--149, 2009.

\bibitem{Engelstadter2009b}
J.~Engelstadter and A.~Telschow.
\newblock Cytoplasmic incompatibility and host population structure.
\newblock {\em Heredity}, (103):196--207, 2009.

\bibitem{FleischnerMujuniPaulusmaSzeider2009}
H.~Fleischner, E.~Mujuni, D.~Paulusma, and S.~Szeider.
\newblock Covering graphs with few complete bipartite subgraphs.
\newblock {\em Theoretical Computer Science}, 410(21-23):2045--2053, 2009.

\bibitem{FlumGrohe2006}
J.~Flum and M.~Grohe.
\newblock {\em Parameterized Complexity Theory}.
\newblock Springer, 2006.

\bibitem{GrammGuoHuffnerNiedermeier2006}
J.~Gramm, J.~Guo, F.~Huffner, and R.~Niedermeier.
\newblock Data reduction, exact, and heuristic algorithms for clique cover.
\newblock In {\em Proceedings of the 8th ACM/SIAM workshop on ALgorithm
  ENgineering and EXperiments (ALENEX)}, pages 86--94, 2006.

\bibitem{GruberHolzer2007}
H.~Gruber and M.~Holzer.
\newblock Inapproximability of nondeterministic state and transition complexity
  assuming {P}$\neq${NP}.
\newblock In {\em Proceedings of the 11th international conference on
  Developments in Language Theory (DLT)}, pages 205--216, 2007.

\bibitem{GuoNiedermeier2007}
J.~Guo and R.~Niedermeier.
\newblock Invitation to data reduction and problem kernelization.
\newblock {\em SIGACT News}, 38(1):31--45, 2007.

\bibitem{Korshunov1983}
A.D. Korshunov.
\newblock Asymptotic behaviour of stirling numbers of the second kind.
\newblock {\em Diskret. Anal.}, 39(1):24–41, 1983.

\bibitem{Niedermeier2006}
R.~Niedermeier.
\newblock {\em Invitation to Fixed-Parameter Algorithms}.
\newblock Oxford University Press, 2006.

\bibitem{Orlin1977}
J.~Orlin.
\newblock Contentment in graph theory: covering graphs with cliques.
\newblock {\em Indagationes Mathematicae}, 80(5):406–424, 1977.

\bibitem{Poinsot2003}
D.~Poinsot, S.~Charlat, and H.~Mer\c cot.
\newblock On the mechanism of {\it wolbachia}-induced cytoplasmic
  incompatibility: confronting the models with the facts.
\newblock {\em Bioessays}, 25(1):259--265, 2003.

\bibitem{Prisner2000}
E.~Prisner.
\newblock Bicliques in graphs {I}: Bounds on their number.
\newblock {\em Combinatorica}, 20(1):109--117, 2000.

\bibitem{TsukiyamaIdeAriyoshiShirakawa1977}
S.~Tsukiyama, M.~Ide, H.~Ariyoshi, and I.~Shirakawa.
\newblock A new algorithm for generating all the maximal independent sets.
\newblock {\em SIAM Journal on Computing}, 6(3):505--517, 1977.

\end{thebibliography}

\section{Appendix}
\label{Section: Appendix}

\begin{figure}
\center
\includegraphics[width=14cm,height=10cm]{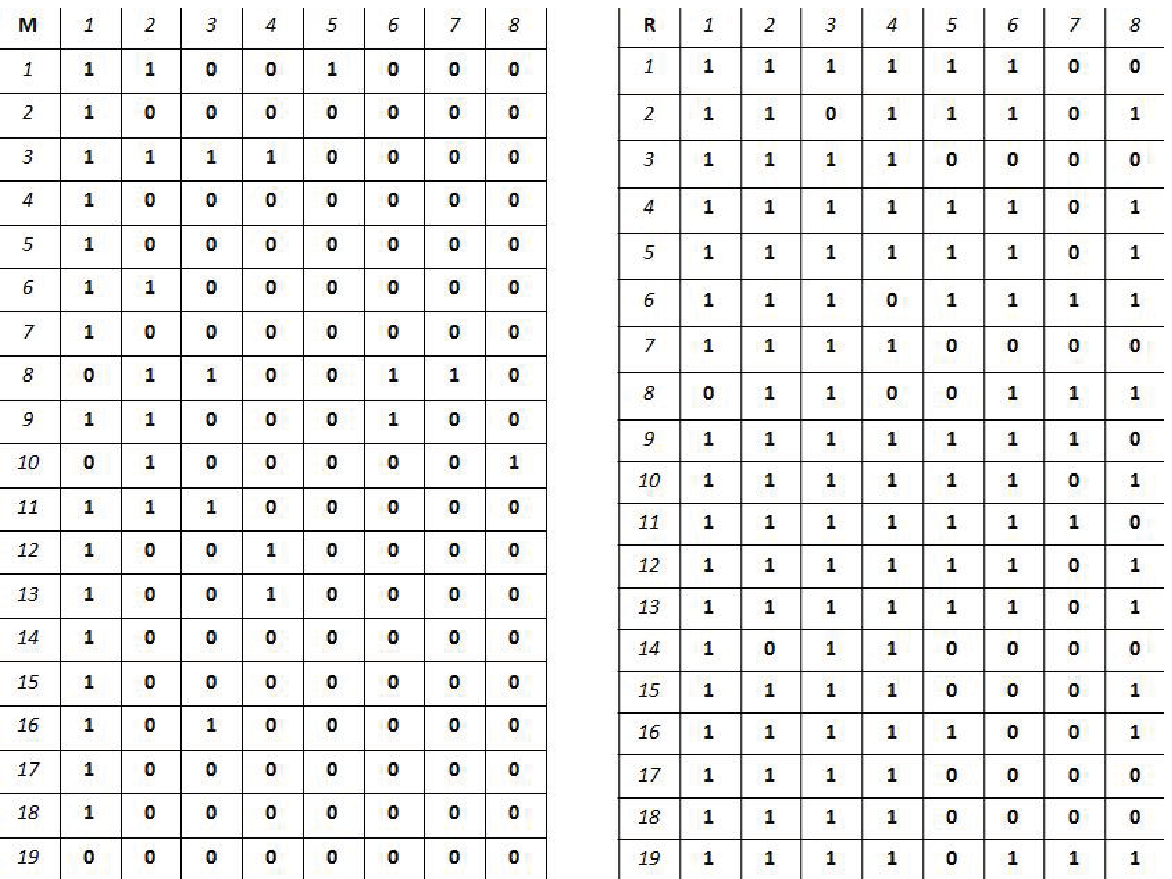}
\caption{The {\it Culex pipiens} solution.}
\label{t:outdata}
\end{figure}

\end{document}